\newtheorem{theorem}{Theorem	}
\newtheorem{definition}{Definition}
\newtheorem{mylemma}{Lemma}
\newtheorem{corollary}{Corollary}
\begin{document}

\newlength{\figurewidth}
\setlength{\figurewidth}{0.7\textwidth}

\newcommand{\argmin}{\mathop{\mathrm{argmin}}} 
%\renewcommand{\qedsymbol}{$\blacksquare$}

% Paper limit: 12 pages

% Correct bad hyphenation here
\hyphenation{net-works semi-conduc-tor me-di-cal le-ve-rage acknow-ledge
acknow-ledgment pro-blems con-fi-gu-ra-tions}

% Title
\title{Every Schnyder Drawing is a Greedy Embedding}

% Authors
\author{ Pierre Leone and Kasun Samarasinghe\\ Centre Universitaire d'Informatique \\ University of Geneva\\ Switzerland\\ \{pierre.leone,kasun.wijesiriwardana\}@unige.ch}

% Make title
\date{}
\maketitle

% Abstract
\begin{abstract}\noindent
Geographic routing is a routing paradigm, which uses geographic coordinates of network nodes to determine routes. Greedy routing, the simplest form of geographic routing forwards a packet to the closest neighbor towards the destination. A greedy embedding is a embedding of a graph on a geometric space such that greedy routing always guarantees delivery. A Schnyder drawing is a classical way to draw a planar graph. In this manuscript, we show that every Schnyder drawing is a greedy embedding, based on a generalized definition of greedy routing.
\end{abstract}

% Introduction
\section{Greedy Routing on Planar Triangular Graphs}
In this manuscript, we establish some results on greedy routing on planar triangular graphs. Planar triangular graphs are a special class of graphs, where every face of the graph is a triangle, including the external face. Hence the outer-face consists of three nodes $A_{1},A_{2}$ and $A_{3}$.

We consider the problem of greedy routing~\cite{bose1999online} on such graphs. In order to perform geometric routing, the graph has to be drawn on a certain geometric space. Such a drawing is called a greedy embedding~\cite{papadimitriou2004conjecture}, which is defined as follows.

\begin{definition}\textbf{Greedy Embedding}
A greedy embedding is an embedding of a graph on a respective geometric space such that, greedy routing always succeeds. In other words, between every node pair $u,v$ there is another node $w$ adjacent to $u$, such that $d(u,v)>d(w,v)$, where $d(.)$ is the underlying metric on the geometric space.
\end{definition}

\noindent Dhandapani~\cite{dhandapani2010greedy} showed that every planar triangulated graph can be drawn on the plane as a greedy embedding. They generalize the classical Schnyder drawing~\cite{schnyder1990embedding} leading to  a family of planar drawings, then they show that there exists a greedy drawing in this set of drawings.

In this work, we prove that every Schnyder drawing is a greedy embedding. We emphasize the use of a generalized definition of a greedy routing~\cite{li2010rules}, on which our algorithm is based. 

\section{Schnyder Drawing}
Given a planar triangular graph, a Schnyder drawing~\cite{schnyder1990embedding} is a straight line drawing of the graph on the plane. In this article, we consider such a drawing on $\mathbb{R}^3$, such that the external nodes $A_{1},A_{2}$ and $A_{3}$ are placed on $(1,0,0), (0,1,0)$ and $(0,0,1)$ respectively. Hence the external face forms an equilateral triangle and the nodes are placed on the plane designated by $x+y+z=1$. 

The Schnyder drawing is computed based on a combinatorial description of a planar triangular graph, which is called a \textit{realizer}, defined as follows.

\begin{theorem}\textbf{Realizer}\cite{schnyder1990embedding}
Given a plane triangulation $G(V,E)$, there exist three directed edge-disjoint trees, $T_{1},T_{2}$ and $T_{3}$, namely the realizer of G, such that for each inner vertex u;
\begin{enumerate}
\item u has an outgoing edge in each of $T_{1},T_{2}$ and $T_{3}$
\item the counterclockwise order of the edges incident on v is as follows: leaving in $T_{1}$, entering in $T_{3}$, leaving in $T_{2}$, entering $T_{1}$, leaving in $T_{3}$, entering in $T_{2}$
\end{enumerate}
\end{theorem}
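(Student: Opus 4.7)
The plan is to prove this by induction on the number of inner vertices of $G$. In the base case $G$ is the outer triangle $A_{1}A_{2}A_{3}$; I would assign the three boundary edges to the three trees so that each $T_i$ contains exactly one edge directed into $A_i$, and both conditions (1) and (2) are vacuous since there are no inner vertices.

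For the inductive step, the first task is to peel off a well-chosen inner vertex. By the canonical ordering of a plane triangulation (the shelling argument originally used by Schnyder), there exists an inner vertex $v$ lying on the boundary of the inner region whose remaining neighbors form a consecutive path along that boundary. Deleting $v$ and retriangulating the exposed face yields a smaller plane triangulation that, by the inductive hypothesis, admits a realizer $T_{1}', T_{2}', T_{3}'$.

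I would then reinsert $v$ and orient its incident edges: one neighbor of $v$ pointing toward $A_{1}$ becomes its outgoing $T_{1}$-neighbor, while the leftmost and rightmost neighbors on the recovered path become its outgoing $T_{2}$- and $T_{3}$-neighbors; the remaining incident edges of $v$ become incoming edges in $T_{1}$. Any temporary retriangulation edges are removed together with their extensions, so edge-disjointness is preserved. By construction $v$ has exactly one outgoing edge per tree, so condition (1) holds at $v$.

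The main obstacle will be verifying the cyclic order condition (2) locally, both at $v$ and at each of its neighbors. At $v$ the pattern follows directly from the way I assigned the outgoing labels together with planarity of the reinsertion. At a neighbor $w$, the reinsertion of $v$ inserts a new edge into $w$'s cyclic sequence and may change the role of the two edges of $w$ that previously lay on the outer boundary of $G\setminus\{v\}$; showing that Schnyder's cyclic pattern persists reduces to a short planar case analysis depending on whether $w$ is an interior vertex of the path or one of its two endpoints. Once this is verified, connectivity of each $T_i$ toward its root $A_i$ follows by induction, and acyclicity is immediate from the uniqueness of outgoing edges at every inner vertex.
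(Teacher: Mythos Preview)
The paper does not give its own proof of this theorem: it is quoted verbatim as Schnyder's result and simply cited to \cite{schnyder1990embedding}, with no proof environment following the statement. There is therefore nothing in the paper to compare your argument against.

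That said, your inductive sketch is essentially Schnyder's original argument, and also the canonical-ordering construction that is equivalent to it. One technical point is worth tightening: the usual way to carry out the induction is not ``delete $v$ and retriangulate the exposed face'' but rather to \emph{contract} a contractible edge from $v$ to one of the outer vertices (say $A_1$), so that the smaller graph is again a plane triangulation on the same edge set minus the contracted edge, and no temporary edges ever appear. Your variant introduces retriangulation edges, obtains a realizer on the wrong edge set, and then has to argue that after removing those auxiliary edges the remaining orientations still form three edge-disjoint spanning trees with the correct local pattern; this is not automatic and is precisely what the contraction formulation avoids. If you replace the deletion--retriangulation step by edge contraction, the rest of your outline (assign the uncontracted edge to $T_1$, the two extreme boundary edges to $T_2$ and $T_3$, and all intermediate edges as incoming $T_1$-edges at $v$; then check the cyclic pattern at $v$ and at each neighbor) goes through exactly as in Schnyder's proof.
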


\noindent The Schnyder planar drawing algorithm~\cite{schnyder1990embedding}, initially constructs a realizer in linear time. A realizer, in turn leads to three paths from each node towards their root nodes in each tree. These paths partition the nodes into three regions $R_{i}$ such that $i={1,2,3}$. Let $n_{i}$ be the total number of nodes in region $R_{i}$ including the nodes in the two paths, those border the region $R_{i}$. Now Schnyder algorithm places each node $u$ on $\frac{1}{n}(n_{1},n_{2},n_{3})$ leading to a planar drawing (see~\cite{nishizeki2004planar} for details)~\cite{schnyder1990embedding}. \footnote{Note that the drawing we consider forms an equilateral triangle, while in~\cite{nishizeki2004planar} they present an algorithm where the outer face does not form an equilateral triangle} 

We present two important properties of a Schnyder drawing in Lemma~\ref{threewedge} and~\ref{lem:enclosing}, which are illustrated in Figures~\ref{fig:threewedge} and~\ref{fig:enclosing}\footnote{See~\cite{dhandapani2010greedy} for the proof}

\begin{figure}
  \centering
  \subfloat[Illustration of Lemma~\ref{threewedge}: There are exactly three edges in the three regions $S_1^u$,$S_3^u$ and $S_5^u$]{\label{fig:threewedge}\includegraphics[width=0.45\textwidth]{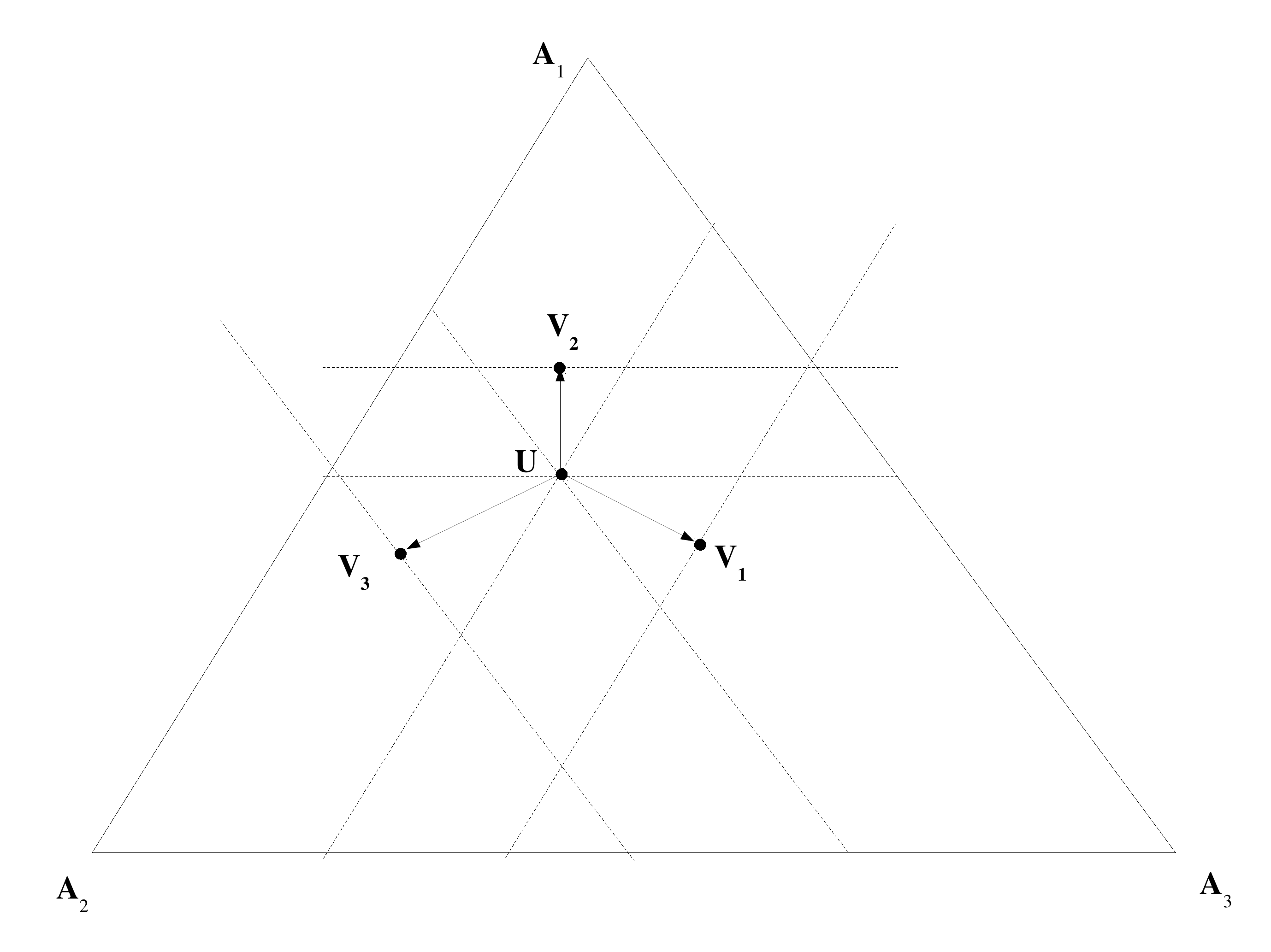}}           \hfill
  \subfloat[Gray region is empty of nodes; Enclosing triangle property (Lemma 6)~\cite{dhandapani2010greedy} ]{\label{fig:enclosing}\includegraphics[width=0.45\textwidth]{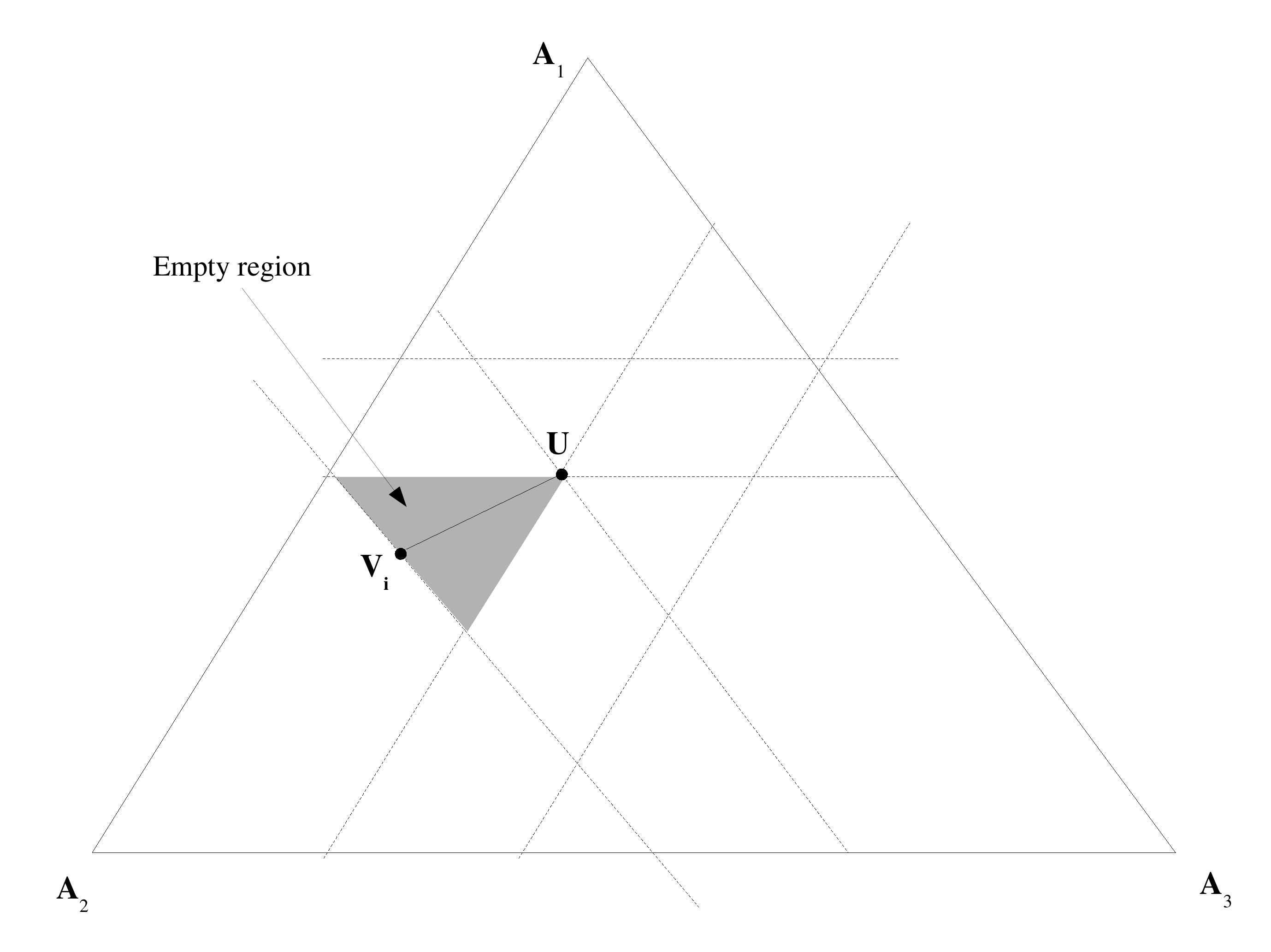}}
  \caption{Two cases to consider in greedy path construction}
  \label{fig:overview}
\vspace{-0.2cm}
\end{figure}	

\begin{mylemma}\textbf{The Three Wedges Property} Lemma 4\cite{dhandapani2010greedy}
\label{threewedge}\\
In every Schnyder drawing the three outgoing edges at an internal vertex $v$ have slopes that fall in the intervals $P_{1}$ in $[60^{o},120^{o}]$, $P_{2}$ in $[180^{o},240^{o}]$ and $P_{3}$ in $[300^{o},360^{o}]$, with exactly one edge in each interval as shown in Figure~\ref{fig:threewedge}
\end{mylemma}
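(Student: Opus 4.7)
The plan is to reason directly in the barycentric coordinates that the Schnyder algorithm assigns, namely $v\mapsto\tfrac{1}{n}(n_1(v),n_2(v),n_3(v))$ on the plane $x+y+z=1$ with $A_i$ at the $i$th standard basis vector. The ``exactly one edge per interval'' part is immediate from condition (1) of the realizer theorem (each internal vertex has exactly one outgoing edge in each $T_i$); hence only the wedge placement remains, and by the threefold symmetry exchanging the three trees and the three corners it suffices to treat the outgoing edge in $T_1$.

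Fix such an edge $(v,u)$ in $T_1$, so $u$ is the $T_1$-parent of $v$. The heart of the proof is the three coordinate inequalities
\[
n_1(u) > n_1(v),\qquad n_2(u)\le n_2(v),\qquad n_3(u)\le n_3(v).
\]
To obtain them I would compare the three paths from $v$ and from $u$ to the corners. Clearly $P_1(u)$ coincides with $P_1(v)$ with $v$ removed. For the other two, I would invoke the counterclockwise rotation axiom of the realizer---outgoing-$T_1$, incoming-$T_3$, outgoing-$T_2$, incoming-$T_1$, outgoing-$T_3$, incoming-$T_2$---to argue that $P_2(u)$ and $P_3(u)$ both leave $u$ into the closed sector on the $A_1$-side of the edge $vu$ and, by induction along the paths, stay within the corresponding regions of $v$. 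This yields $R_2(u)\subseteq R_2(v)$ and $R_3(u)\subseteq R_3(v)$, while $R_1(u)\supseteq R_1(v)\cup\{v\}$; counting nodes then gives the three inequalities, the first being strict because $v$ itself migrates from the boundary of $R_1(v)$ into the interior of $R_1(u)$.

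Setting $\alpha=n_1(u)-n_1(v)$, $\beta=n_2(u)-n_2(v)$, $\gamma=n_3(u)-n_3(v)$, we have $\alpha>0$, $\beta,\gamma\le 0$, and $\alpha+\beta+\gamma=0$. A short computation projects the displacement $u-v$ onto the drawing plane: with $A_1$ placed at the top of the equilateral triangle the image is a vector whose vertical component is proportional to $\alpha>0$ and whose horizontal component is proportional to $\gamma-\beta$. The bound $|\gamma-\beta|\le\alpha$, which follows from $\beta,\gamma\in[-\alpha,0]$, then forces the polar angle into $[60^\circ,120^\circ]$, with the endpoints attained exactly when $\gamma=0$ or $\beta=0$. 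The two remaining wedges follow by cycling the indices.

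The step I expect to be the main obstacle is the region-containment claim: that $R_2$ and $R_3$ genuinely shrink while $R_1$ grows as we cross an outgoing $T_1$-edge. Making this rigorous requires the full counterclockwise ordering axiom together with a small case analysis for vertices sitting on more than one of the paths $P_j(v)$, most notably those on the edge $vu$ itself; the rest of the argument is essentially linear algebra on the barycentric differences.
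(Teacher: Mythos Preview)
The paper does not give its own proof of this lemma: it is quoted verbatim as Lemma~4 of Dhandapani, and the surrounding footnote explicitly refers the reader to \cite{dhandapani2010greedy} for the argument. So there is nothing in the present manuscript to compare your proposal against.

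That said, your outline is the standard route and matches the argument in \cite{dhandapani2010greedy}. The decisive step is exactly the region--containment lemma you isolate: if $u$ is the $T_1$-parent of $v$ then $R_2(u)\subseteq R_2(v)$, $R_3(u)\subseteq R_3(v)$, and $R_1(u)\supseteq R_1(v)\cup\{v\}$; this is Schnyder's basic nesting property, proved from the cyclic edge order at each vertex. Your subsequent linear-algebra computation is correct: with $\alpha+\beta+\gamma=0$, $\alpha>0$, $\beta,\gamma\le 0$ one gets $\gamma-\beta=-\alpha-2\beta\in[-\alpha,\alpha]$, and in the equilateral realisation this is precisely the condition that the slope of $u-v$ lies in $[60^{\circ},120^{\circ}]$. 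The only place to be a little more careful than you indicate is the boundary bookkeeping in the node counts $n_i$ (the paper's convention includes the two bordering paths in $R_i$), but this does not affect the inequalities you need.
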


\noindent Schnyder drawing leads to an important void region around and out-going edge as constitutes in Lemma~\ref{lem:enclosing}

\begin{mylemma}\textbf{The enclosing triangle property} Lemma 6\cite{dhandapani2010greedy}
Given a vertex u and an outgoing edge (u,v) belonging, without loss of generality to $P_{1}$, the equilateral triangle formed by drawing lines with slopes of $0^{o},60^{o}$ and $120^{o}$, as shown in Figure~\ref{fig:enclosing} is free of any other vertices. This results holds for both regions $P_{2}$ and $P_{3}$.
\label{lem:enclosing}
\end{mylemma}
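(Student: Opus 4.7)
The plan is to prove the lemma by contradiction, combining Schnyder's barycentric coordinate formula with the planar tree structure of the realizer. Assume a vertex $w \notin \{u,v\}$ lies strictly inside the enclosing triangle $T$ associated with the outgoing $T_1$-edge $(u,v)$.

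First, I would make the geometry precise. On the plane $x+y+z = 1$, each vertex $a$ is placed at $(n_1(a), n_2(a), n_3(a))/n$, where I write $n_i(a)$ for the size of the region $R_i$ at $a$. Since $(u,v)$ is outgoing from $u$ in $T_1$, $v$ is the $T_1$-parent of $u$, which forces $n_1(v) > n_1(u)$, $n_2(v) < n_2(u)$, and $n_3(v) < n_3(u)$. The enclosing triangle $T$ of Figure~\ref{fig:enclosing} is then the equilateral triangle whose three sides (with slopes $0^\circ, 60^\circ, 120^\circ$) are the level sets $x = n_1(v)/n$, $y = n_2(u)/n$, and $z = n_3(u)/n$, placing $u$ at one vertex and $v$ on the opposite side. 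Thus any vertex $w$ strictly interior to $T$ must satisfy
\begin{equation*}
n_1(w) > n_1(v), \qquad n_2(w) < n_2(u), \qquad n_3(w) < n_3(u).
\end{equation*}

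Second, I would derive a contradiction via the realizer's tree structure. By the Three Wedges Lemma applied at $w$, the outgoing $T_1$-edge at $w$ has slope in $[60^\circ, 120^\circ]$, hence leaves $w$ into the interior of $T$ toward the side opposite $u$. The full $T_1$-path $P_1(w)$ from $w$ to the root $A_1$ must eventually exit $T$, and I would show that every possible exit is impossible: either $P_1(w)$ crosses the edge $(u,v)$, violating planarity of the tree $T_1$; or it passes through $u$ or $v$, which would make that vertex an ancestor of $w$ in $T_1$ and thereby force $n_1(w) \le n_1(v)$, contradicting the strict inequality above.

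The main obstacle will be ruling out the remaining geometric exit options for $P_1(w)$, namely escape through the two sides of $T$ incident to $u$ at points other than $u$ itself. I would close this by combining the edge-disjointness of $T_1, T_2, T_3$ with the cyclic ordering around $u$ given by the second axiom of a realizer: the only $T_1$-edge incident to $u$ lying in the wedge $P_1(u)$ is $(u,v)$ itself, so any $T_1$-path entering $T$ from the other two sides would be forced to cross either $(u,v)$ or an incident edge of a different tree at $u$, contradicting planarity or edge-disjointness. Boundary cases where $w$ or an intermediate vertex of $P_1(w)$ lies on $\partial T$ reduce to the same coordinate argument. The analogous statements for edges in $P_2$ and $P_3$ follow by symmetric relabeling of $T_1, T_2, T_3$.
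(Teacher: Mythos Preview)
First, note that the paper does not actually prove this lemma: it is quoted from Dhandapani's paper and the footnote explicitly refers the reader there for the argument. So there is no ``paper's own proof'' to compare against; I can only evaluate your proposal on its merits.

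Your displayed inequalities for a point strictly interior to $T$ are wrong, and the error is not cosmetic. With your own description of $T$ --- apex at $u$, opposite side on the level set $x=n_1(v)/n$ --- interior points satisfy $n_1(u)<n_1(w)<n_1(v)$ together with $n_2(w)<n_2(u)$ and $n_3(w)<n_3(u)$. The region you wrote down, $n_1(w)>n_1(v)$, is the part of the wedge $S_1^u$ \emph{above} the horizontal through $v$, and it does not even have $u$ as a vertex, contradicting your own setup. Once the inequality is corrected to $n_1(w)<n_1(v)$, your contradiction for the case ``$P_1(w)$ passes through $v$'' evaporates: $v$ being a $T_1$-ancestor of $w$ only yields $n_1(w)<n_1(v)$, which is exactly what you already have.

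Your case split is also incomplete. The edge $(u,v)$ is a single segment inside $T$; the $T_1$-path from $w$ can perfectly well exit through the top side $x=n_1(v)/n$ at a point different from $v$ without ever crossing $(u,v)$, and you do not treat this. Conversely, the case you flag as the ``main obstacle'' --- exit through the two sides meeting at $u$ --- is in fact trivial: every $T_1$-edge strictly increases the first coordinate (this is the content of the Three Wedges Lemma), so $P_1(w)$ is monotone toward $A_1$ and can never reach $u$ or cross a side of $T$ incident to $u$. The proposed fix via cyclic edge orders and edge-disjointness is unnecessary there and does not help with the genuine gap at the top side.

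A workable route is to abandon the exit-chasing argument for $P_1(w)$ and instead invoke Schnyder's region-containment lemma directly: from $n_2(w)<n_2(u)$ and $n_3(w)<n_3(u)$ one places $w$ in the appropriate Schnyder region of $u$, and then the nesting $R_i(v)\subset R_i(w)$ (or its vertex-count consequence) together with the fact that $v$ is the $T_1$-parent of $u$ forces $n_1(v)\le n_1(w)$, contradicting $n_1(w)<n_1(v)$. That is essentially how Dhandapani argues.
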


\noindent In the following section, we introduce the definition of a saturated graph and show that every Schnyder drawing implies a saturated graph.

\section{Saturated Graph}
In~\cite{leone2016greedy}, we introduced the concept of saturated graphs and show that there exists a local greedy routing algorithm with guaranteed delivery on such graphs. Definition of a saturated graph is purely combinatorial, but to construct those properties we use an underlying virtual coordinate system namely, virtual raw anchor coordinate (VRAC) system~\cite{leone2016greedy}. In the following, we describe the construction of a saturated graph, given a Schnyder drawing of a planar triangular graph.

Let $G(V,E)$ is the graph in concern, where $V$ is the set of vertices(nodes) and $E$ is the set of edges, which is a planar triangular graph. Consider a Schnyder drawing of $G$, where the outer face is drawn as an equilateral triangle. We denote the three outer vertices as $A_{1},A_{2}$ and $A_{3}$. Let every node computes the distance from the edges of the outer-triangle and assign them as its coordinate as illustrated in Figure~\ref{fig:coordinates}, which is in fact the VRAC system. Based on this coordinate assignment, we can define three order relations on the set of nodes $V$ as follows. 

\begin{figure}
  \centering
  \includegraphics[width=0.6\textwidth]{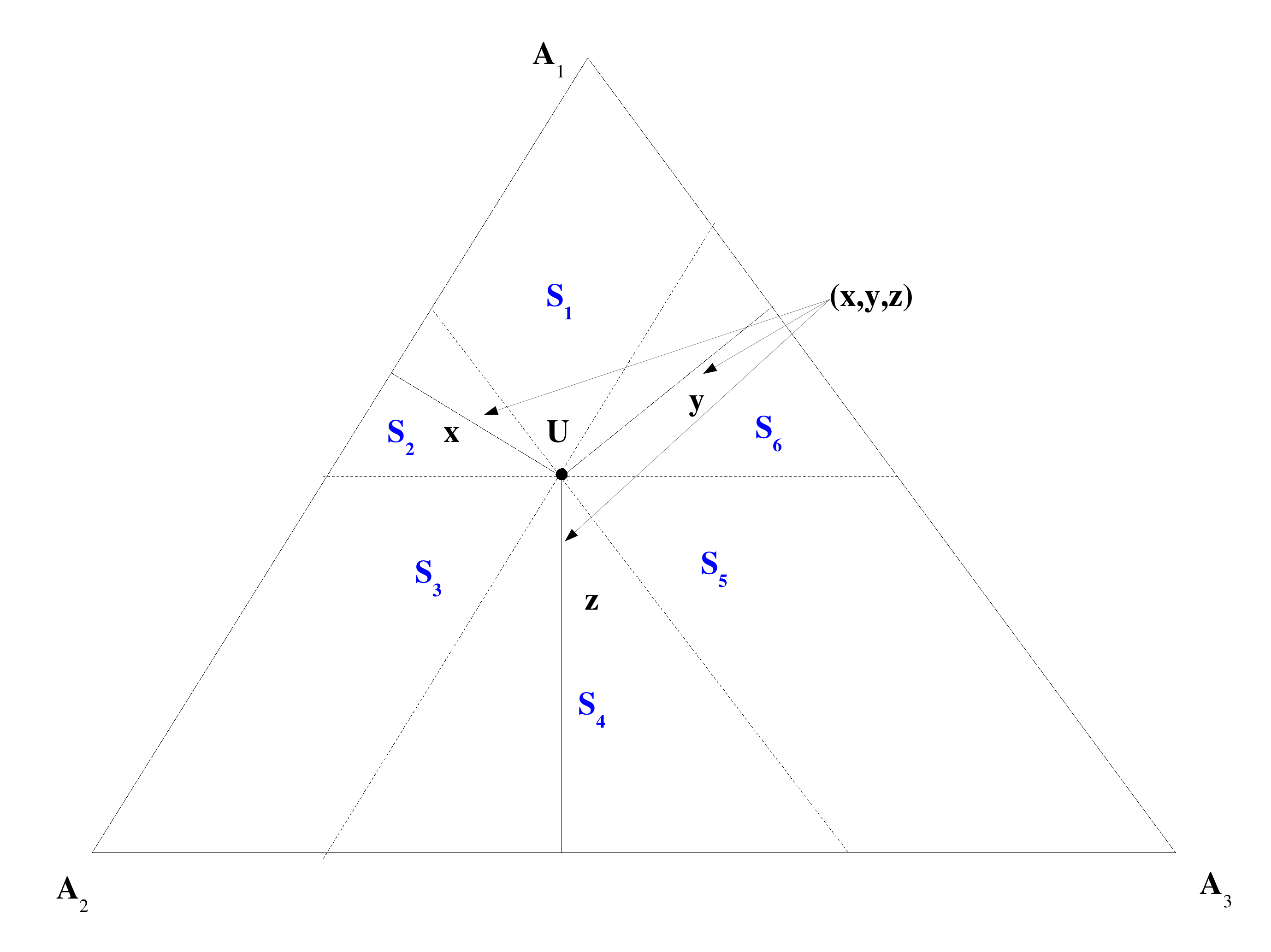}  
  \caption{A node estimates the perpendicular distances from the outer-face edges and assign itself the raw distances as the coordinate (x,y,z)}
  \label{fig:coordinates}
\end{figure}

\begin{definition}\label{def:order} The three order relations $<_i$, $i=1,2,3$ on $V\times V$ are defined by
\begin{equation*}
\forall u,v\in V\quad u<_i v  \Longleftrightarrow d(u,A_i)>d(v,A_i)   \Longleftrightarrow u_i < v_i .
\end{equation*}
\end{definition}

\noindent These three orders permit the definition of sectors associated with a node $u$.\\

\begin{definition}\label{def:sectors} We define the following sectors associated with a node $u\in V$. Note that the reference node $u$ does not belong either of to the sectors. As depicted in Figure~\ref{fig:coordinates}, sectors $s_{i}^u$ correspond to the labeled regions.
\begin{align*}
s_1^u = \{v\mid ~u~<_ 1 ~v, ~u~>_ 2 ~v, ~u~>_ 3 ~v\}~~ \cap~~ \widehat{A_1A_2A_3}.\\
s_2^u = \{v\mid ~u~<_ 1 ~v, ~u~<_ 2 ~v, ~u~>_ 3 ~v\}~~ \cap~~ \widehat{A_1A_2A_3}.\\
s_3^u = \{v\mid ~u~>_ 1 ~v, ~u~<_ 2 ~v, ~u~>_ 3 ~v\}~~ \cap~~ \widehat{A_1A_2A_3}.\\
s_4^u = \{v\mid ~u~>_ 1 ~v, ~u~<_ 2 ~v, ~u~<_ 3 ~v\}~~ \cap~~ \widehat{A_1A_2A_3}.\\
s_5^u = \{v\mid ~u~>_ 1 ~v, ~u~>_ 2 ~v, ~u~<_ 3 ~v\}~~ \cap~~ \widehat{A_1A_2A_3}.\\
s_6^u = \{v\mid ~u~<_ 1 ~v, ~u~>_ 2 ~v, ~u~<_ 3 ~v\}~~ \cap~~ \widehat{A_1A_2A_3}.\\
\end{align*}
\end{definition}

\noindent Based on this characterization, Leone et.al~\cite{huc2012efficient}, proposed a distributed planarization algorithm, assuming a unit disk communication graph model. They used the following planarization criterion, where they derive a planar sub-graph $G(V,\widetilde E)$.

\begin{definition}\label{def:planar} The vertex set $\widetilde V = V$ and
\begin{eqnarray}
\widetilde E = \biggl\{ (u,v) \Big\vert~v\in s_{2k-1}^u \text{ and } v=\text{ min}_k (s_{2k-1}^u)~k=1,2\text{ or }3 \text{ and } (u,v)\in E\biggr\}
\label{minprop}
\end{eqnarray}
\end{definition}

\noindent Algorithmically, to obtain a planar graph, each node retains only the minimum edge in each sector $S_1^u,S_3^u$ and $S_5^u$. The minimum edge in a given sector is determined based on a partial order corresponds for each sector. Such a partial order is defined considering the intersection of three orders ($<_i$, $i=1,2,3$) in a given sectors (see~\cite{huc2012efficient} for details). Note that given an arbitrary graph, these three sectors may not have edges and there can be arbitrary number of incoming edges to a node in sectors $S_2^u,S_4^u$ and $S_6^u$. A saturated graph is a special case of this setting, which is defined as below.

\begin{definition}[Saturated Graph]\label{saturated} A planar graph is saturated if there exists exactly one edge in each sector $s^u_{2i-1}, i=1,2,3$ for each node $u$.
\end{definition}

\noindent We present the following result from~\cite{leone2016greedy} on greedy routing on saturated graphs.

\begin{theorem}
There is a greedy routing algorithm on every saturated planar graph. 
\end{theorem}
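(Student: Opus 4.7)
The plan is to design an explicit forwarding rule on a saturated planar graph and to establish termination via a monotone progress measure built from the order relations $<_1,<_2,<_3$ of Definition~\ref{def:order}. Given a current vertex $u$ and destination $t$, the algorithm first locates $t$ in one of the six sectors $s_j^u$. If $t$ lies in an odd sector $s_{2i-1}^u$, Definition~\ref{saturated} provides the unique outgoing edge in that sector, which is used for forwarding. If $t$ lies in an even sector $s_{2i}^u$, the rule deterministically picks one of the outgoing edges in the two adjacent odd sectors, for instance by selecting the axis along which $t$ most strongly dominates $u$.

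For termination, I would attach to each routing state $(u,t)$ a pair $\Phi(u,t)=(\sigma,\gamma)$, where $\sigma\in\{+,-\}^3$ encodes the signs of $u_i-t_i$ (equivalently, the sector of $t$ relative to $u$) and $\gamma$ is the VRAC coordinate of $u$ along an axis selected by $\sigma$. The central claim, to be proved sector by sector, is that one hop of the rule strictly decreases $\Phi$ in a fixed lexicographic order. For an odd-sector destination, the saturated edge preserves $\sigma$ while strictly decreasing $|u_i-t_i|$; for an even-sector destination, the hop either migrates $t$ into an odd sector (strictly reducing $\sigma$ in the chosen ordering of signatures) or preserves $\sigma$ while strictly improving $\gamma$. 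Since $V$ is finite and the coordinates take finitely many values, lexicographic monotonicity rules out cycles and forces arrival at $t$.

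The main obstacle is the even-sector analysis, because no outgoing edge of $u$ points directly toward $t$ and the forwarded packet might a priori oscillate between even sectors at consecutive vertices. To control this I would invoke Lemma~\ref{threewedge} and Lemma~\ref{lem:enclosing}: the Three Wedges property pins every vertex's outgoing edges inside the three $60^\circ$ cones $P_1,P_2,P_3$, while the enclosing triangle property guarantees that a large equilateral region around each outgoing edge is free of nodes. Together they preclude the geometric configurations that would allow oscillation, and reduce the even-sector case to a finite sequence of sideways hops that must eventually expose $t$ in an odd sector. Once this reduction is in place, the lexicographic argument of the previous paragraph finishes the proof.
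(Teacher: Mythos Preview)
The paper does not actually prove this theorem; it is imported from the companion work~\cite{leone2016greedy}, so there is no in-paper argument to compare your attempt against.

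On its own merits, your sketch has two concrete gaps. First, the odd-sector step asserts that forwarding from $u$ to the unique neighbour $v\in s_{2i-1}^u$ ``preserves $\sigma$ while strictly decreasing $|u_i-t_i|$''. Neither half is justified: saturation only tells you that $v\in s_{2i-1}^u$, which forces $v_i>u_i$ and $v_j<u_j$ for $j\neq i$, but says nothing about where $t$ sits relative to $v$. The neighbour may overshoot $t$ in any coordinate, so every sign in $\sigma$ can flip and $|v_i-t_i|$ can exceed $|u_i-t_i|$. A workable progress measure has to be monotone along the chosen edge independently of $t$---for instance the raw coordinate $u_i$ itself---and you then need a separate argument that the routing rule never switches the active index $i$ in a way that permits cycling.

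Second, your even-sector analysis invokes Lemma~\ref{threewedge} and Lemma~\ref{lem:enclosing}. Those are properties of Schnyder drawings, not of saturated graphs in general; the paper establishes only that Schnyder drawings yield saturated graphs, not the converse. Using them here either silently restricts the theorem to Schnyder drawings---strictly weaker than what is stated---or obliges you to prove that every saturated planar graph satisfies analogues of those lemmas, which you have not done. A proof of the theorem at the stated generality must proceed from Definition~\ref{saturated} and the order relations of Definition~\ref{def:order} alone.
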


\subsection{Schnyder drawings and saturated graphs}

A saturated graph is defined without a reference to an embedding. Our greedy routing algorithm in~\cite{leone2016greedy}, uses the saturated graph property to prove the delivery guarantees. Following lemma constitutes a straight forward relationship between a Schnyder drawing and a saturated graph.

\begin{mylemma}
Every Schnyder drawing implies a saturated graph.
\end{mylemma}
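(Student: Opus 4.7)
\bigskip\noindent\textbf{Proof plan.} The plan is to identify the three odd combinatorial sectors $s_1^u, s_3^u, s_5^u$ of Definition~\ref{def:sectors} with the three angular wedges $P_1, P_2, P_3$ of Lemma~\ref{threewedge}, and then to read off the saturated condition from the Realizer Theorem. I focus on internal vertices throughout; the three outer vertices $A_1, A_2, A_3$ do not carry full sectors and are handled by the conventions of~\cite{leone2016greedy}.

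First I would make the sector/wedge identification precise. By Viviani's theorem the three VRAC coordinates of any interior point sum to the height of the outer equilateral triangle, so their gradients $\nabla u_1, \nabla u_2, \nabla u_3$ are three unit vectors at pairwise $120^\circ$, each an inward normal to the side opposite the corresponding $A_i$. Under the orientation implicit in Lemma~\ref{threewedge} these gradients sit at $90^\circ$, $210^\circ$, $330^\circ$. Writing the displacement as $v-u=(\cos\theta,\sin\theta)$, the three strict inequalities $v_1>u_1$, $v_2<u_2$, $v_3<u_3$ defining $s_1^u$ translate into three half-plane conditions on $\theta$, whose intersection is exactly the open wedge $(60^\circ,120^\circ)=P_1$. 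Rotations by $\pm 120^\circ$ give $s_3^u=P_2$ and $s_5^u=P_3$, and the three even sectors fall in the complementary wedges.

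With this identification, Lemma~\ref{threewedge} combined with the Realizer Theorem produces exactly one outgoing edge at $u$ in each odd sector. To finish I must show that no incoming edge contributes to an odd sector. Consider an incoming edge $(v,u)$ in tree $T_i$; the same edge is outgoing from $v$ in $T_i$, so by Lemma~\ref{threewedge} its slope at $v$ lies in $P_i$, and reversing direction places its slope at $u$ in $P_i+180^\circ$. A direct check yields $P_1+180^\circ\subset s_4^u$, $P_2+180^\circ\subset s_6^u$, and $P_3+180^\circ\subset s_2^u$ — all even sectors. Hence every odd sector of $u$ contains exactly one edge, namely the outgoing edge in the corresponding tree, which is precisely Definition~\ref{saturated}.

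The main obstacle I expect is the boundary case: Lemma~\ref{threewedge} allows the closed intervals $[60^\circ,120^\circ]$, $[180^\circ,240^\circ]$, $[300^\circ,360^\circ]$, whereas the sector inequalities of Definition~\ref{def:sectors} are strict, so an edge whose slope coincides with a wedge boundary would belong to no sector at all. I would dispatch this by arguing that such a boundary slope would force a coordinate equality $v_j=u_j$ between the two endpoints, which the strict change of the region sizes along any realizer edge (and the empty enclosing triangle of Lemma~\ref{lem:enclosing}, forbidding such degenerate collinearity near an outgoing edge) rules out.
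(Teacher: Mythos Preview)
Your argument is correct and shares its backbone with the paper's: both identify the odd sectors $s_1^u,s_3^u,s_5^u$ with the angular wedges $P_1,P_2,P_3$ and then invoke the Three Wedges Property (Lemma~\ref{threewedge}) to place the three outgoing realizer edges there. The difference is in how the second ingredient, Lemma~\ref{lem:enclosing}, is deployed. You use the enclosing triangle only to rule out the degenerate boundary slopes, and you establish the ``exactly one'' clause of Definition~\ref{saturated} by your $+180^\circ$ argument showing that every incoming edge lands in an even sector. The paper instead takes uniqueness for granted from Lemma~\ref{threewedge} (leaving the incoming-edge exclusion implicit in the realizer's cyclic order) and spends Lemma~\ref{lem:enclosing} on a different point: that the outgoing neighbour $v_i$ is the $\min_k$ of its sector, so that the Schnyder edge actually satisfies criterion~(\ref{minprop}) and hence lies in the planarized set $\widetilde E$. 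Your route is tighter with respect to Definition~\ref{saturated} as written; the paper's route additionally certifies compatibility with the planarization of Definition~\ref{def:planar}, which is what the greedy algorithm of~\cite{leone2016greedy} ultimately runs on. If you want full parity with the paper, add one line observing that the empty enclosing triangle also forces $v_i=\min_k(s_{2k-1}^u)$.
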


\begin{proof}
Due to the three wedge property of a Schnyder drawing (see Lemma~\ref{threewedge}), we know that there is exactly one edge $(u,v_{i})$ in sectors $s_i^u$ where $i={1,3,5}$. Moreover due to the $enclosing\ triangle\ property$(see Lemma~\ref{lem:enclosing}) of a Schnyder drawing, there is no node $w$ such that $w<_{j}v$ where $j={1,2,3}$. Hence it follows the criterion for a saturated edge as in equation~\ref{minprop}, implying a saturated graph.    
\end{proof}

\noindent In~\cite{leone2016greedy}, we devised a greedy routing algorithm which guarantees delivery, when the graph is saturated. Note that in~\cite{leone2016greedy}, we do not use a metric to define the greedy path, instead use a generalized definition of a greedy path. Following lemma concludes the resulting connection between greedy embeddings and a Schnyder drawing of a planar triangular graph.

\begin{corollary}
Every Schnyder drawing is a greedy embedding.
\end{corollary}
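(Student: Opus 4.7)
My plan is to derive this corollary as a direct composition of the two results that immediately precede it in the paper. The preceding lemma establishes that every Schnyder drawing induces a saturated graph (in the sense of Definition~\ref{saturated}), and the quoted theorem from~\cite{leone2016greedy} guarantees the existence of a greedy routing algorithm on every saturated planar graph. Chaining the two gives the corollary at once: starting from any Schnyder drawing, view the resulting drawn graph as a saturated planar graph via the lemma, and then invoke the theorem to conclude that greedy routing always succeeds on it, which is by definition what it means for the drawing to be a greedy embedding.

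The main subtle point — and the one I would flag explicitly in the proof — is the distinction between the classical notion of greedy routing (Definition of Greedy Embedding in the paper, which uses an underlying metric $d(\cdot)$) and the generalized notion of greedy routing used in~\cite{leone2016greedy}, which is formulated combinatorially in terms of the orders $<_1,<_2,<_3$ and the sectors $s_i^u$ rather than in terms of a metric. The abstract already signals that the statement is being made under this generalized definition, so I would write one clean sentence acknowledging this: the greedy embedding notion that we verify is the generalized one based on monotone progress with respect to the three order relations of Definition~\ref{def:order}, which the theorem from~\cite{leone2016greedy} ensures for every saturated planar graph.

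Concretely, the steps I would carry out in the proof are: first, fix an arbitrary Schnyder drawing of a plane triangulation $G(V,E)$; second, apply the preceding lemma to deduce that this drawing is a saturated graph in the sense of Definition~\ref{saturated}; third, apply the theorem to obtain a greedy routing algorithm that delivers from any source to any destination; fourth, observe that the existence of such an always-succeeding greedy routing is precisely the (generalized) greedy embedding property. No further geometric argument is required beyond what is already packaged into the two ingredients, so there is no hard technical obstacle — the proof is a two-line citation argument whose only delicate aspect is keeping the notion of ``greedy'' consistent between the definition given in Section~1 and the one used in~\cite{leone2016greedy}.
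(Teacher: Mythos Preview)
Your proposal is correct and matches the paper's own treatment: the corollary is stated without a separate proof, as an immediate consequence of the preceding lemma (every Schnyder drawing yields a saturated graph) combined with the theorem from~\cite{leone2016greedy} guaranteeing greedy routing on saturated planar graphs. Your explicit caveat that the ``greedy'' notion being verified is the generalized, order-based one from~\cite{leone2016greedy} rather than the metric-based Definition~1 is exactly the point the paper signals in the abstract and in the surrounding text, so you have captured both the argument and its one subtlety.
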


% Related work

\bibliographystyle{unsrt}
\bibliography{/home/kasun/Unige/PhD/Bib/adhoc-routing}

\begin{thebibliography}{1}

\bibitem{bose1999online}
Prosenjit Bose and Pat Morin.
\newblock Online routing in triangulations.
\newblock In {\em Algorithms and Computation}, pages 113--122. Springer, 1999.

\bibitem{papadimitriou2004conjecture}
Christos~H Papadimitriou and David Ratajczak.
\newblock On a conjecture related to geometric routing.
\newblock In {\em Algorithmic Aspects of Wireless Sensor Networks}, pages
  9--17. Springer, 2004.

\bibitem{dhandapani2010greedy}
Raghavan Dhandapani.
\newblock Greedy drawings of triangulations.
\newblock {\em Discrete \& Computational Geometry}, 43(2):375--392, 2010.

\bibitem{schnyder1990embedding}
Walter Schnyder.
\newblock Embedding planar graphs on the grid.
\newblock In {\em Proceedings of the first annual ACM-SIAM symposium on
  Discrete algorithms}, pages 138--148. Society for Industrial and Applied
  Mathematics, 1990.

\bibitem{li2010rules}
Yujun Li, Yaling Yang, and Xianliang Lu.
\newblock Rules of designing routing metrics for greedy, face, and combined
  greedy-face routing.
\newblock {\em Mobile Computing, IEEE Transactions on}, 9(4):582--595, 2010.

\bibitem{nishizeki2004planar}
Takao Nishizeki and Md~Saidur Rahman.
\newblock {\em Planar graph drawing}, volume~12.
\newblock World Scientific, 2004.

\bibitem{leone2016greedy}
Pierre Leone and Kasun Samarasinghe.
\newblock Greedy routing on virtual raw anchor coordinate system.
\newblock In {\em Distributed Computing in Sensor Systems (DCOSS), 2016 IEEE
  12th International Conference on}, 2016.

\bibitem{huc2012efficient}
Florian Huc, Aubin Jarry, Pierre Leone, and Jose Rolim.
\newblock Efficient graph planarization in sensor networks and local routing
  algorithm.
\newblock In {\em Distributed Computing in Sensor Systems (DCOSS), 2012 IEEE
  8th International Conference on}, pages 140--149. IEEE, 2012.

\end{thebibliography}
% Bibliography
\end{document}